\title{
Online Safety under Multiple Constraints and Input Bounds using \gatekeeper{}: \\Theory and Applications
}
\def\BibTeX{{\rm B\kern-.05em{\sc i\kern-.025em b}\kern-.08em
    T\kern-.1667em\lower.7ex\hbox{E}\kern-.125emX}}
\author{Devansh R. Agrawal, \IEEEmembership{Student Member, IEEE}, and Dimitra Panagou, \IEEEmembership{Senior Member, IEEE}
\thanks{
The authors would like to acknowledge the support of the National Science Foundation (NSF) under grant no. 2137195. Both D Agrawal ({\tt\small devansh@umich.edu}) and D Panagou ({\tt\small dpanagou@umich.edu}) are with the Robotics Department,  University of Michigan, Ann Arbor, USA.
}
\thanks{
$^\dagger$ {\tt \scriptsize \href{https://github.com/dev10110/GatekeeperFormationFlight.jl}{https://github.com/dev10110/GatekeeperFormationFlight.jl}}
}
}
\newcommand{\naturals}{\mathbb{N}}
\newcommand{\reals}{\mathbb{R}}
\newcommand{\R}{\reals}
\newcommand{\Rnonneg}{\reals_{\geq 0}}
\renewcommand{\S}{\mathbb{S}}
\newcommand{\pd}{\S_{+}}
\newcommand{\doubleto}{\rightrightarrows}
\newcommand{\st}{\mathrm{s.t.}}
\newcommand{\norm}[1]{\left\Vert #1 \right \Vert}
\newcommand{\bmat}[1]{\begin{bmatrix}#1\end{bmatrix}}
\newcommand{\argmin}[1]{\underset{\substack{#1}}{\operatorname{argmin}}}
\newcommand{\minimize}[1]{\underset{\substack{#1}}{\operatorname{minimize}}}
\newcommand{\Ccal}{\mathcal{C}}
\newcommand{\Dcal}{\mathcal{D}}
\newcommand{\Fcal}{\mathcal{F}}
\newcommand{\Lcal}{\mathcal{L}}
\newcommand{\Scal}{\mathcal{S}}
\newcommand{\Tcal}{\mathcal{T}}
\newcommand{\Ucal}{\mathcal{U}}
\newcommand{\Xcal}{\mathcal{X}}
\newcommand{\eqn}[1]{\begin{align} #1 \end{align}}
\newcommand{\neqn}[1]{\begin{align*} #1 \end{align*}}
\newcommand{\seqn}[2][]{
\begin{subequations}
#1
\begin{align} #2 \end{align}
\end{subequations}
}
\newcommand{\gatekeeper}{\texttt{gatekeeper}}
\newcommand{\nom}{{\rm nom}}
\newcommand{\can}{{\rm can}}
\newcommand{\com}{{\rm com}}
\newcommand{\back}{{\rm bak}}
\newcommand{\opt}{{\rm opt}}
\newcolumntype{g}{>{\columncolor{gray!30}}r}
\acrodef{BCH}[BCH]{Baker-Campbell-Hausdorff}
\acrodef{CBF}[CBF]{Control Barrier Function}
\acrodef{CBF-QP}[CBF-QP]{Control Barrier Function Quadratic Program}
\acrodef{CDC}[CDC]{Conference on Decision and Control}
\acrodef{CESDF}[CESDF]{Certified ESDF}
\acrodef{CLF}[CLF]{Control Lyapunov Function}
\acrodef{CVO}[C-VO]{Certified Visual Odometry}
\acrodef{DCT}[DCT]{Discrete Cosine Transform}
\acrodef{DMP}[DMP]{Distance Map Planner}
\acrodef{EKF}[EKF]{Extended Kalman Filter}
\acrodef{ESDF}[ESDF]{Euclidean Signed Distance Field}
\acrodef{EZ}[EZ]{Engagement Zone}
\acrodef{FOV}[FoV]{Field of View}
\acrodef{FPV}[FPV]{First Person View}
\acrodef{GNC}[GNC]{Graduated-Nonconvexity}
\acrodef{GP}[GP]{Gaussian Process}
\acrodef{HOCBF}[HOCBF]{Higher Order Control Barrier Function}
\acrodef{ICCBF}[ICCBF]{Input-Constrained Control Barrier Function}
\acrodef{IEEE}[IEEE]{Institute of Electrical and Electronics Engineers}
\acrodef{IMU}[IMU]{Inertial Measurement Unit}
\acrodef{ISS}[ISS]{Input-to-State}
\acrodef{KF}[KF]{Kalman Filter}
\acrodef{ML}[ML]{Machine Learning}
\acrodef{MPC}{Model Predictive Control}
\acrodef{NGPKF}[NGPKF]{Numerical Gaussian Process Kalman Filter}
\acrodef{ODE}[ODE]{Ordinary Differential Equation}
\acrodef{QP}[QP]{Quadratic Program}
\acrodef{RGBD}[RGBD]{RGB-Depth}
\acrodef{RL}[RL]{Reinforcement Learning}
\acrodef{RoS}[RoS]{Rate of Spread}
\acrodef{SDE}[SDE]{Stochastic Differential Equation}
\acrodef{SDF}[SDF]{Signed Distance Field}
\acrodef{SFC}[SFC]{Safe Flight Corridor}
\acrodef{SLAM}[SLAM]{Simultaneous Localization and Mapping}
\acrodef{SOS}[SOS]{Sum of Squares}
\acrodef{SVD}[SVD]{Singular Value Decomposition}
\acrodef{TCAC}[TCAC]{Technical Committee on Aerospace Controls}
\acrodef{TLS}[TLS]{Truncated Least Squares}
\acrodef{TSDF}[TSDF]{Truncated Signed Distance Field}
\acrodef{TSD}[TSD]{Target Spatial Distribution}
\acrodef{UAV}[UAV]{Unmanned Aerial Vehicle}
\acrodef{VIO}[VIO]{Visual Inertial Odometry}
\acrodef{VO}[VO]{Visual Odometry}
\acrodef{WLS}[WLS]{Weighted Least Squares}
\theoremstyle{plain}
\newtheorem{theorem}{Theorem}
\newtheorem{corollary}{Corollary}
\newtheorem{lemma}{Lemma}
\newtheorem{problem}{Problem}
\newtheorem{definition}{Definition}
\newtheorem{assumption}{Assumption}
\newtheorem{remark}{Remark}
\theoremstyle{remark}
\begin{document}

\maketitle
\thispagestyle{empty}
\pagestyle{empty}

\begin{textblock*}{20cm}(1cm,1cm) 
{Author's Copy.  This paper has been accepted for publication to IEEE L-CSS 2025. }
\end{textblock*}

\begin{abstract}
This letter presents an approach to guarantee online safety of a cyber-physical system under multiple state and input constraints. 
Our proposed framework, called \gatekeeper{}, recursively guarantees the existence of an infinite-horizon trajectory that satisfies all constraints and system dynamics.
Such trajectory is constructed using a backup controller, which we define formally in this paper. \gatekeeper{} relies on a small number of verifiable assumptions, and is computationally efficient since it requires optimization over a single scalar variable. We make two primary contributions in this letter. (A)~First, we develop the theory of \gatekeeper{}: we derive a sub-optimality bound relative to a full nonlinear trajectory optimization problem, and show how this can be used in runtime to validate performance. This also informs the design of the backup controllers and sets. (B)~Second, we demonstrate in detail an application of \gatekeeper{} for multi-agent formation flight, where each Dubins agent must avoid multiple obstacles and weapons engagement zones, both of which are nonlinear, nonconvex constraints. \href{https://github.com/dev10110/GatekeeperFormationFlight.jl}{[Code]${}^\dagger$}
\end{abstract}

\begin{IEEEkeywords}
Constrained control; Optimization algorithms; Aerospace
\end{IEEEkeywords}

\section{INTRODUCTION}
\label{section:introduction}

\IEEEPARstart{I}{ncreasing} use of robotic systems in real-world applications necessitates advanced controllers that ensure safety, robustness, and effectiveness in human-machine teaming~\cite{alves2018considerations}.

This letter formalizes and builds upon our recent work on online safety verification and control~\cite{agrawal2024gatekeeper}, which introduces \gatekeeper{} as a novel algorithmic component between the planner and the controller of the autonomous system. 
To briefly illustrate the principle behind \gatekeeper{}, consider a \ac{UAV} navigating an unknown environment. The \ac{UAV} follows a nominal trajectory, generated by its planner and tracked by its controller. At each iteration, \gatekeeper{} performs two key steps: (i)~it evaluates the currently known safe set (derived from onboard sensing), and a backup set, which represents a region the \ac{UAV} can retreat to if the nominal trajectory is predicted to exit the safe set in the future; (ii)~it constructs a candidate trajectory by stitching together the nominal trajectory (up to a future time horizon) and a backup trajectory that leads safely into the backup set. The candidate is accepted if it remains within the known safe set. If so, it becomes the new committed trajectory to be tracked by the controller. Otherwise, the \ac{UAV} continues to follow the previously committed trajectory. Because a new trajectory is only committed when guaranteed to be safe, the \ac{UAV} is always lies within the safe set. Importantly, \gatekeeper{} only forward propagates candidate trajectories, making it computationally efficient.

\emph{Literature Review:} Various approaches guarantee safety of autonomous systems. \ac{MPC} offers a natural framework, but solving nonlinear/nonconvex problems online can be computationally expensive or can fail without warning~\cite{borrelli2017predictive, lopez2019dynamic, yin2025safe}. \acp{CBF} enforce safety constraints through \acp{QP}~\cite{ames2016control, garg2024advances, cohen2024safety}, though finding a valid barrier function remains challenging in the face of multiple constraints and input bounds. Reachability-based methods offer strong safety guarantees~\cite{mitchell2005time, ganai2024hamilton}, but are intractable in high-dimensions. 

Increasingly backup-based methods have become popular~\cite{tordesillas2019faster, chen2021backup, singletary2022safe, hobbs2023runtime, jung2024contingency, janwani2024learning, ko2024backup}, where a precomputed fallback is used as the system approaches unsafe conditions. Our framework extends this idea while addressing key limitations. Compared to~\cite{tordesillas2019faster} it explicitly considers nonlinear systems and nonconvex constraints. Compared to~\cite{chen2021backup, singletary2022safe, ko2024backup} instead of mixing the nominal and backup control inputs, we check when it is necessary to switch to the backup, allowing the system to follow the nominal closely. By guaranteeing safety using trajectories rather than controllers, we can enable performant backup maneuvers. \cite{hobbs2023runtime}~reviews runtime assurance methods, each variant appropriate for a different application. A common challenge is in choosing when/how to intervene, which we address by analyzing the optimality of backups.

\emph{Contributions:} One open question is, how optimal is the generated trajectory of \gatekeeper{}, and how is this affected by the choice of the backup controller and set? This letter formalizes \gatekeeper{} by deriving suboptimality bounds, and defining the formal construction of the backup controller. This also provides a framework to analyze the (sub)optimality of other safety architectures, since most methods (e.g. ~\cite{tordesillas2019faster, singletary2022safe}) do not consider/minimize the penalty on mission performance.

Second, we demonstrate the framework in a challenging multi-agent formation flight problem. Each Dubins agent must avoid multiple \acp{EZ}, both of which are nonlinear, nonconvex constraints that depend on the robot's state. This combination of tight input bounds, multiple constraints, and nonconvexity means that most modern approaches fail to guarantee safety. Furthermore, we demonstrate that our solution is computationally efficient and close-to-optimal solutions can be computed in 1\% of the time required to solve this problem using IPOPT.

\section{THEORY}

\emph{Notation: } $\naturals= \{0, 1, 2, ...\}$ is the set of natural numbers. 
$\R, \Rnonneg$ denote reals, and non-negative reals.
$\pd^n$ is the set of symmetric positive-definite matrices in $\R^{n \times n}$. The notation $\{1:N\}$ defines the set $\{1, ..., N \} \subset \naturals$. For $v\in \R^n$, $\norm{v} = \sqrt{ v^T v}$, $\norm{v}_P = \sqrt{ v^T P v}$. The set of piecewise continuous functions $w : \Tcal \to \Dcal$ are denoted by $\Lcal(\Tcal, \Dcal)$, and $\Tcal = \R$ when omitted. $\doubleto$ denotes a set-valued map, e.g. $\Scal: \Tcal \doubleto \Xcal$ means that for any $t \in \Tcal$, $\Scal(t) \subset \Xcal$ is a set.

\subsection{Preliminaries}

Consider a (possibly time-varying) dynamical system
\eqn{
\label{eqn:dynamics}
\dot x = f(t, x, u),
}
where $x \in \Xcal \subset \R^n$ is the state, and $u \in \Ucal \subset \R^m$ is the control input. The dynamics $f: \Rnonneg \times \Xcal \times \Ucal \to \R^n$ are piecewise continuous in $t$ and locally Lipschitz in $x$ and $u$. Given a feedback policy $u = \pi(t, x)$ with $\pi$ piecewise continuous in $t$ and locally Lipschitz in $x$, the closed-loop system admits a unique solution over some interval.

\begin{definition}[Trajectory]
    Let $\Tcal = [t_i, t_f) \subset \R$. A trajectory is a pair of functions $(p: \Tcal \to \Xcal , u: \Tcal \to \Ucal)$ satisfying
    \eqn{
    \dot p(t) = f(t, p(t), u(t)) \quad \forall t \in (t_i, t_f).
    }
The set of all trajectories from $(t, x) \in \R \times \Xcal$ is
\eqn{
\Phi(t, x) = \{ (p, u) :  p(t) = x \text{ and } (p, u) \text{ is a trajectory}\}.
}
\end{definition}

Let $\Scal: \R \doubleto \Xcal$ denote the (possibly time-varying) set of states satisfying constraints, e.g., a polytope $\{x : Ax \leq b\}$ or superlevel set $\{x : h(x) \geq 0\}$. The system satisfies the constraints if $x(t) \in \Scal(t) \quad \forall t \geq t_0$. In principle, one can compute the set of initial states that admit a safe trajectory:
\eqn{
\Fcal(t) = \Big \{ x & \in \Xcal : \exists (p, u) \in \Phi(t, x) \text{ satisfying} \nonumber \\
& \quad p(\tau) \in \Scal(\tau) \ \forall \tau \geq t \Big \}.
}
Computing $\Fcal(t)$ is generally intractable, as it involves solving a reachability problem. Instead, we assume a known backup set $\Ccal : \R \doubleto \Xcal$:
\begin{definition}[Backup set]
    $\Ccal: \R \doubleto \Xcal$ is a backup set if
        \eqn{
            \Ccal(t) \subset \Scal(t) \quad \forall t \in \R,
        }
    and the controller $\pi^B: \R \times \Xcal \to \Ucal$ is such that for all $t_i \in \R$ the closed-loop system $\dot x = f(t, x, \pi^B(t, x))$ satisfies 
    \eqn{
        x(t_i) \in \Ccal(t_i) \implies x(t) \in \Ccal(t) \ \forall t \geq t_i.
    }
\end{definition}

\begin{lemma}
    If $\Ccal$ is a backup set, then
    \eqn{
\Ccal(t) \subset \Fcal(t) \subset \Scal(t) \quad \forall t \in \R.
}
\end{lemma}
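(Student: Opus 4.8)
The plan is to establish the two inclusions $\Ccal(t) \subset \Fcal(t)$ and $\Fcal(t) \subset \Scal(t)$ separately, both directly from the relevant definitions; no reachability computation is actually needed because the backup set already encodes a concrete safe trajectory.

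For the right inclusion $\Fcal(t) \subset \Scal(t)$, I would take any $x \in \Fcal(t)$ and unpack the definition of $\Fcal$: there exists a trajectory $(p,u) \in \Phi(t,x)$ with $p(\tau) \in \Scal(\tau)$ for all $\tau \geq t$. Evaluating this safety condition at the single instant $\tau = t$ and using $p(t) = x$ immediately yields $x \in \Scal(t)$. This half requires nothing beyond the definition of $\Fcal$.

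For the left inclusion $\Ccal(t) \subset \Fcal(t)$, the idea is to exhibit, for each $x \in \Ccal(t)$, an explicit safe trajectory built from the backup controller $\pi^B$. Concretely, I would let $p$ be the solution of the closed-loop system $\dot p = f(\tau, p, \pi^B(\tau, p))$ with $p(t) = x$, and set $u(\tau) = \pi^B(\tau, p(\tau))$. Then $(p,u)$ qualifies as a trajectory in the sense of the Trajectory definition: $u$ is piecewise continuous (since $\pi^B$ is piecewise continuous in $t$, locally Lipschitz in $x$, and $p$ is continuous) and $u(\tau) \in \Ucal$ by construction, so $(p,u) \in \Phi(t,x)$. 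Because $x = p(t) \in \Ccal(t)$, the forward-invariance property in the backup-set definition gives $p(\tau) \in \Ccal(\tau)$ for all $\tau \geq t$, and the containment $\Ccal(\tau) \subset \Scal(\tau)$ then upgrades this to $p(\tau) \in \Scal(\tau)$ for all $\tau \geq t$. Hence $x$ admits a safe trajectory, i.e. $x \in \Fcal(t)$.

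The only subtlety — and the step I would treat most carefully — is existence of the closed-loop solution on the entire interval $[t,\infty)$ rather than merely on the short interval guaranteed by local Lipschitzness. I expect this to come for free from the hypothesis: the invariance statement $x(t_i) \in \Ccal(t_i) \implies x(t) \in \Ccal(t)\ \forall t \geq t_i$ is only meaningful if the closed-loop solution is defined and stays in $\Ccal \subset \Xcal$ for all $t \geq t_i$, so forward completeness of the backup trajectory is already baked into the definition of a backup set, and no separate boundedness or finite-escape-time argument is needed. Combining the two inclusions then finishes the proof.
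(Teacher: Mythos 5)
Your proof is correct and follows essentially the same route as the paper's: the right inclusion by evaluating the safety condition in the definition of $\Fcal$ at $\tau = t$, and the left inclusion by running the backup controller from $x \in \Ccal(t)$ and using forward invariance plus $\Ccal(\tau) \subset \Scal(\tau)$. You simply spell out more detail (membership in $\Phi(t,x)$ and forward completeness of the closed-loop solution) than the paper's two-sentence argument, and your reading that completeness is implicit in the backup-set definition is consistent with the paper.
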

\begin{proof}
    If $x \in \Fcal(t)$, then by definition, $x = p(t) \in \Scal(t)$. If $x \in \Ccal(t)$, the backup controller ensures $x(\tau) \in \Ccal(\tau) \subset \Scal(\tau)$ for all $\tau \geq t$, hence $x \in \Fcal(t)$.
\end{proof}

We specify the mission objectives in terms of a desired/nominal trajectory for the system to follow. Formally, 
\begin{definition}[Nominal Trajectory]
\label{def:nominal_trajectory}
    Given state $x_k \in \Xcal$ at time $t_k \in \R$, a planner generates a nominal trajectory, 
    \neqn{
    (p^{\nom}_k , u^\nom_k) \in \Phi(t_k, x_k)
    }    
    defined over $\Tcal = [t_k, t_k + T_H]$. 
\end{definition}

We cannot directly execute the nominal trajectory since it may violate safety constraints and may not end in $\Fcal(t_k + T_H)$, risking future constraint violation.



\subsection{Problem Statement}


To address this, we seek a modified trajectory that is both safe and tracks the nominal plan. We can pose this as:
\seqn[\label{eqn:opt_prob_C}]{
\minimize{ \substack{p \in \Lcal(\Xcal), \\ u \in \Lcal(\Ucal)} } & \int_{t_k}^{t_k + T_H} L \left( t, p(t), u(t), p^{\nom}_k(t), u^{\nom}_k(t) \right) dt \label{eqn:opt_prob_C:obj} \\
\st~& \dot p = f(t, p(t), u(t)), \quad \quad \forall t \in \Tcal, \label{eqn:opt_prob_C:dyn}\\
 &p(t) \in \Scal(t), \quad \quad \quad \quad \quad \forall t \in \Tcal, \label{eqn:opt_prob_C:safety}\\
 &p(t_k) = x_k, \label{eqn:opt_prob_C:initial}\\
 &p(t_k + T_H + T_B) \in \Ccal(t_k + T_H + T_B), \label{eqn:opt_prob_C:terminal}
 }
with $\Tcal = [t_k, t_k + T_H + T_B]$. 


This is a finite-horizon optimal control problem. The terminal constraint~\eqref{eqn:opt_prob_C:terminal} ensures the trajectory ends in a backup set, which is stricter than requiring $p(t_k + T_H) \in \Fcal(t_k + T_H)$. We choose the former since $\Fcal$ is unknown. The objective~\eqref{eqn:opt_prob_C:obj} is to minimize the cost of deviating from the nominal trajectory. Note, the cost only integrates over $[t_k, t_k + T_H]$ a subset of $\Tcal$.  We make an assumption on $L$:
\begin{assumption}
\label{assumption:L}
    $L: \R \times \Xcal \times \Ucal \times \Xcal \times \Ucal \to \Rnonneg$ is positive definite about $(x_2, u_2)$:
    \neqn{
    &L(t, x_1, u_1, x_2, u_2) \geq 0, \\
    &L(t, x_1, u_1, x_2, u_2) = 0 \iff (x_1 = x_2 \text{ and } u_1 = u_2).
    }
    for all $t \in \R, x_1, x_2 \in \Xcal, u_1, u_2 \in \Ucal$.
\end{assumption}

Examples satisfying this include:
\seqn{
L_1(\cdot) &= \norm{x_1 - x_2}^2_Q + \norm{u_1 - u_2}^2_R,\\
L_2(\cdot) &= e^{-\gamma (t-t_k)}L_1(\cdot),\\
L_3(\cdot) &= \begin{cases}
    0 & \text{if } x_1 = x_2 \text{ and } u_1  = u_2,\\
    1 & \text{else}.
\end{cases} \label{eqn:L3}
}
where $Q \in \pd^n$, $R \in \pd^m$, $\gamma > 0$

The problem addressed in this paper is as follows:
\begin{problem}
    \label{problem:main}
    Design a method to solve~\eqref{eqn:opt_prob_C} under~\Cref{assumption:L} and assuming a backup set is known for the system. If solutions are suboptimal, quantify the suboptimality.
\end{problem}

In summary, the goal is to compute a \emph{committed trajectory} $(p^\com, u^\com)$ that guarantees safety over $[t_k, \infty)$ while minimizing deviation from the nominal trajectory. As~\eqref{eqn:opt_prob_C} is typically nonconvex, our focus is on efficient computation of feasible solutions and real-time suboptimality bounds. 

\subsection{Constructing candidate trajectories}

Here we describe our solution, a recursive method for constructing committed trajectories. The main result, \Cref{theorem:suboptimality}, proves feasibility and quantifies suboptimality. While the core principles of \gatekeeper{} as presented in~\cite{agrawal2024gatekeeper} remain unchanged, key differences are highlighted in \Cref{remark:differences_to_TRO}.


\gatekeeper{} triggers at discrete times $t_k \in \R$ for $k \in \naturals$. Each candidate trajectory is a concatenation of two segments: the nominal trajectory (as defined in~\Cref{def:nominal_trajectory}) and a backup trajectory, defined as follows:
\begin{definition}[Backup Trajectory] 
\label{def:backup_trajectory}
    Let $T_B \geq 0$. For any $t_s \in \R$ and $x_s \in \Xcal$, a trajectory $(p^\back, u^\back) \in \Phi(t_s, x_s)$ defined on $[t_s, \infty)$ is a backup trajectory from $(t_s, x_s)$ if 
    \eqn{
    p^\back(t_s + T_B) &\in \Ccal(t_s + T_B),
    }
    and for all $t \geq t_s + T_B$, $u^\back$ satisfies
    \eqn{
    u^\back(t) &= \pi^B(t, p^\back(t)).
    }
\end{definition}
In words, a trajectory $(p^\back,  u^\back)$ is a backup trajectory from the specified $(t_s, x_s)$, if (A)~trajectory is dynamically feasible starting from $(t_s, x_s)$ (since $(p^\back,  u^\back) \in \Phi(t, x)$), (B)~the trajectory reaches $\Ccal$ within $T_B$ seconds, and (C)~after reaching $\Ccal$, the control input corresponds to the backup controller. Recall from the definition of $\pi^B$ this ensures the trajectory remains within $\Ccal$. Thus for any backup trajectory, we have $p^\back(\tau) \in \Ccal(\tau)$ for all $\tau \geq t_s + T_B$. 

A candidate trajectory is one that switches between executing the nominal trajectory and a backup trajectory. This idea of stitching together a section of the nominal trajectory with a backup trajectory is a core principle of \gatekeeper{}.
\begin{definition}[Candidate trajectory]
\label{def:candidate_trajectory}
    Consider a system with state $x_k \in \Xcal$ at time $t_k \in \R$. Let the nominal trajectory be $(p_k^\nom, u_k^\nom) \in \Phi(t_k, x_k)$ defined over $[t_k, t_k + T_H]$. A candidate trajectory with switch time $t_s \in [t_k, t_k + T_H]$ is $(p_k^\can, u_k^\can) \in \Phi(t_k, x_k)$ defined by 
    \eqn{
    (p_k^\can(\tau), u_k^\can(\tau)) = \begin{cases}
        (p_k^\nom(\tau), u_k^\nom(\tau)) & \text{if } \tau \in [t_k, t_s),\\
        (p_k^\back(\tau), u_k^\back(\tau)) & \text{if } \tau \geq t_s,
    \end{cases}
    }
    where $(p_k^\back, u_k^\back)$ is a backup trajectory from $(t_s, p_k^\nom(t_s))$. 
\end{definition}

A candidate is \emph{valid} if it is safe over a finite horizon:
\begin{definition}[Valid] A candidate trajectory $(p_k^\can, u_k^\can) \in \Phi(t_k, x_k)$ with switch time $t_s \in \R$ is valid if
        \eqn{
        p_k^\can(t) \in \Scal(t) \quad \forall t \in [t_k, t_s + T_B],
        }
        where $T_B \geq 0$ is the horizon of the backup trajectory.
\end{definition}

This immediately leads to the following lemma:
\begin{lemma}
\label{lemma:candidate_is_safe}
    Consider a system with state $x_k \in \Xcal$ at time $t_k \in \R$. If $(p_k^\can, u_k^\can) \in \Phi(t_k, x_k)$ is a valid candidate trajectory,
    then 
    \eqn{
    p_k^\can(t) \in \Scal(t) \quad \forall t \geq t_k.
    }
\end{lemma}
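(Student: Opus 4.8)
The plan is to split the time horizon $[t_k, \infty)$ at the instant $t_s + T_B$ and argue safety on each piece separately, invoking the validity assumption on the first piece and the forward invariance of the backup set on the second.

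First, on the interval $[t_k, t_s + T_B]$ the conclusion is immediate: by the definition of a valid candidate, $p_k^\can(t) \in \Scal(t)$ holds for all $t \in [t_k, t_s + T_B]$, so nothing further is needed there.

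Second, for $t \geq t_s + T_B$ I would exploit the structure of the candidate. By the definition of the candidate trajectory (with switch time $t_s$), on $[t_s, \infty)$ the candidate coincides with a backup trajectory $(p_k^\back, u_k^\back)$ emanating from $(t_s, p_k^\nom(t_s))$. The defining properties of a backup trajectory supply two facts: (i)~$p_k^\back(t_s + T_B) \in \Ccal(t_s + T_B)$, i.e. the trajectory has reached the backup set by time $t_s + T_B$; and (ii)~for all $t \geq t_s + T_B$ the input satisfies $u_k^\back(t) = \pi^B(t, p_k^\back(t))$, so the trajectory evolves under the closed-loop backup dynamics. Combining (i) and (ii) with the invariance property built into the definition of a backup set, namely that under $\pi^B$ one has $x(t_s + T_B) \in \Ccal(t_s + T_B) \implies x(t) \in \Ccal(t)$ for all $t \geq t_s + T_B$, yields $p_k^\can(t) = p_k^\back(t) \in \Ccal(t)$ for every $t \geq t_s + T_B$. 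Since a backup set obeys $\Ccal(t) \subset \Scal(t)$ for all $t$, this gives $p_k^\can(t) \in \Scal(t)$ on $[t_s + T_B, \infty)$. Taking the union of the two intervals establishes the claim for all $t \geq t_k$.

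The argument is essentially bookkeeping over the definitions, so I do not expect a genuine technical obstacle; the one point that needs care is applying the invariance property with initial time $t_s + T_B$, the first instant at which the candidate is \emph{guaranteed} to lie in $\Ccal$, rather than at the switch time $t_s$. The candidate need not belong to $\Ccal$ during the transient $[t_s, t_s + T_B]$, which is exactly why validity must be assumed out to $t_s + T_B$ and not merely to $t_s$; this interval is where the finite-horizon safety check does the real work before the infinite-horizon guarantee is handed off to the invariance of $\Ccal$.
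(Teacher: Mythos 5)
Your proof is correct and follows essentially the same route as the paper's: validity covers $[t_k, t_s + T_B]$, and for $t \geq t_s + T_B$ the backup-trajectory definition plus forward invariance of $\Ccal$ under $\pi^B$ and the inclusion $\Ccal(t) \subset \Scal(t)$ give safety. Your closing remark about why the invariance argument must start at $t_s + T_B$ rather than $t_s$ is a correct and worthwhile clarification that the paper leaves implicit.
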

\begin{proof}
    First we prove that $p_k^\can(t) \in \Scal(t)$ for all $t \geq t_k$. Since the candidate is valid, we  have $p_k^\can(t) \in \Scal(t) \ \forall t \in [t_k, t_b]$ where $t_b = t_s + T_B$. 
    Since the candidate trajectory follows the backup trajectory for all $t \in [t_s, t_b]$, it reaches the backup set at $t_b$, i.e., $p_k^\can(t_b) \in \Ccal(t_b)$. For all $t \geq t_b$ the control input matches the backup controller, and because $\pi^B$ renders $\Ccal$ forward invariant, it follows that $p_k^\can(t) \in \Ccal(t) \forall t \geq t_b$. Finally since for any backup set $\Ccal(t) \subset \Scal(t) \ \forall t \in \R$, we have $ p_k^\can(t) \in \Ccal(t) \subset \Scal(t) \quad \forall t \geq t_b$. Therefore, we have $p_k^\can(t) \in \Scal(t)$ for all $t \geq t_k$. 
\end{proof}

This proves that any valid candidate trajectory is a safe trajectory for all future time, but only requires one to check safety over a finite horizon $[t_k, t_s + T_B]$. This finite-horizon check enables practical implementation of the framework.

\subsection{Optimality of candidate trajectories}

Beyond safety, we would also like our trajectories to be optimal. Here we address: (A)~how should one construct the backup trajectory, and (B)~how should one select the best candidate trajectory, i.e., select the best switching time $t_s$?

Recall that $L(\cdot)$ is the running cost, as defined in~\eqref{eqn:opt_prob_C:obj}. The objective functional~\eqref{eqn:opt_prob_C:obj} can be split into two intervals:
\seqn{
J(p, u) &= \int_{t_k}^{t_k + T_H} L (\cdot) dt \\
&= \underbrace{\int_{t_k}^{t_s} L (\cdot) dt}_{J_1(p, u, t_s)} + \underbrace{\int_{t_s}^{t_k + T_H} L (\cdot) dt}_{J_2(p, u, t_s)}
}
where $(\cdot) = \left( t, p(t), u(t), p^{\nom}_k(t), u^{\nom}_k(t) \right)$, and $t_s \in [t_k, t_k + T_H]$ is a switch time. Then we have the following:
\seqn{
&J(p^\can, u^\can) \nonumber\\
&\quad = J_1(p^\can, u^\can, t_s) + J_2(p^\can, u^\can, t_s)\\
&\quad = J_1(p^\nom, u^\nom, t_s) + J_2(p^\back, u^\back, t_s)\\
&\quad = 0 + J_2(p^\back, u^\back, t_s)
}
where the $J_1$ term is zero, since for all $t \in [t_k, t_s]$, the candidate trajectory matches the nominal trajectory. Thus, by~\Cref{assumption:L}, the integrand is $L(\cdot) = 0$, and therefore $J_1(\cdot) = 0$. We can thus propose a solution to~\Cref{problem:main}:
\begin{theorem}
\label{theorem:suboptimality}
    Suppose at time $t_k \in \R$ the system state is $x_k \in \Xcal$. Let $(p_k^\nom, u_k^\nom) \in \Phi(t_k, x_k)$ be the nominal trajectory. Suppose $(p_k^\can, u_k^\can) \in \Phi(t_k, x_k)$ is a candidate trajectory with switch time $t_s \in [t_k, t_k + T_H]$. 
    
    If $(p_k^\can, u_k^\can)$ is a valid candidate trajectory, then
    \begin{enumerate}
        \item[(A)] the candidate trajectory is a feasible solution to~\eqref{eqn:opt_prob_C},  
        \item[(B)] the suboptimality of the candidate trajectory with respect to~\eqref{eqn:opt_prob_C} is upper-bounded by
        \footnote{The integral in~\eqref{eqn:suboptimality_bound} is over $[t_s, t_k + T_H]$, not $[t_k, t_k + T_H]$ as in~\eqref{eqn:opt_prob_C:obj}.}
        \eqn{
        \bar B = \int_{t_s}^{t_k + T_H} L(\cdot) dt \label{eqn:suboptimality_bound}
        }
        where $(\cdot) = (t, p_k^\can(t), u_k^\can(t), p^\nom(t), u^\nom(t))$. 
    \end{enumerate}
\end{theorem}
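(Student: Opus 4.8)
The plan is to verify feasibility constraint-by-constraint for part (A), and then exploit the additive split of the objective already derived just above the theorem statement for part (B).

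For part (A), I would check each of the four constraints in~\eqref{eqn:opt_prob_C}. Dynamic feasibility~\eqref{eqn:opt_prob_C:dyn} and the initial condition~\eqref{eqn:opt_prob_C:initial} are immediate from the membership $(p_k^\can, u_k^\can) \in \Phi(t_k, x_k)$ built into~\Cref{def:candidate_trajectory}. The safety constraint~\eqref{eqn:opt_prob_C:safety} over $\Tcal = [t_k, t_k + T_H + T_B]$ follows directly from~\Cref{lemma:candidate_is_safe}, which already guarantees $p_k^\can(t) \in \Scal(t)$ for all $t \geq t_k$ whenever the candidate is valid. The only constraint needing a short argument is the terminal constraint~\eqref{eqn:opt_prob_C:terminal}: I would invoke~\Cref{def:backup_trajectory}, which places the backup segment in $\Ccal$ at time $t_s + T_B$ and, via $\pi^B$, keeps it there for all later times; since $t_s \leq t_k + T_H$ implies $t_s + T_B \leq t_k + T_H + T_B$, forward invariance of $\Ccal$ yields $p_k^\can(t_k + T_H + T_B) \in \Ccal(t_k + T_H + T_B)$.

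For part (B), I would reuse the objective decomposition stated before the theorem. Writing $J^\star$ for the optimal value of~\eqref{eqn:opt_prob_C}, the suboptimality is $J(p_k^\can, u_k^\can) - J^\star$. Because $L \geq 0$ under~\Cref{assumption:L}, every feasible cost is nonnegative, so in particular $J^\star \geq 0$ and the suboptimality is at most $J(p_k^\can, u_k^\can)$. I would then use that the candidate coincides with the nominal on $[t_k, t_s)$, so by the positive-definiteness clause of~\Cref{assumption:L} the integrand vanishes there, giving $J_1 = 0$ and hence $J(p_k^\can, u_k^\can) = J_2 = \bar B$ as in~\eqref{eqn:suboptimality_bound}. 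Chaining these inequalities bounds the suboptimality by $\bar B$.

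The argument is short because the heavy lifting is already done by~\Cref{lemma:candidate_is_safe} and the pre-theorem split. The one place warranting care is the terminal constraint in part (A): the switch time $t_s$ ranges freely over $[t_k, t_k + T_H]$, so I must confirm not only that the backup reaches $\Ccal$ no later than the fixed terminal time but, crucially, that membership \emph{persists} to that time — this is exactly where forward invariance of $\Ccal$ under $\pi^B$, rather than mere reachability, is essential. The other point to watch is bookkeeping: the cost integrates only over $[t_k, t_k + T_H]$ while the constraints range over the longer $\Tcal$, so I would keep the two horizons distinct throughout to avoid conflating $T_H$ with $T_H + T_B$.
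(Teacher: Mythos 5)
Your proposal is correct and follows essentially the same route as the paper's proof: constraint-by-constraint feasibility for (A) using the $\Phi(t_k,x_k)$ membership, validity, and forward invariance of $\Ccal$, and the $J = J_1 + J_2$ split with $J_1 = 0$ and $J^\star \geq 0$ for (B). If anything, your version is marginally tidier — you justify safety over the full horizon $\Tcal$ by citing \Cref{lemma:candidate_is_safe} explicitly (the paper's "since it is valid" glosses over the gap between $[t_k, t_s+T_B]$ and $[t_k, t_k+T_H+T_B]$ when $t_s < t_k + T_H$), and you get $J^\star \geq 0$ directly from $L \geq 0$ rather than via the paper's detour through an example where $J = 0$ is attainable.
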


\begin{proof}
    
    \emph{Claim~A:} Since $(p_k^\can, u_k^\can) \in \Phi(t_k, x_k)$ it must satisfy~\eqref{eqn:opt_prob_C:dyn} and~\eqref{eqn:opt_prob_C:initial}. Since it is valid, it must satisfy~\eqref{eqn:opt_prob_C:safety}, and at time $t_{b} = t_s + T_B \leq t_k + T_H + T_B$ (since $t_s \in [t_k, t_k + T_H]$) the trajectory reaches $p_k^\can(t_b) \in \Ccal(t_b)$. Since for $t \geq t_b$ the candidate trajectory remains within $\Ccal$, it satisfies~\eqref{eqn:opt_prob_C:terminal}. 
    
    \emph{Claim~B:} For convenience, let $t_H = t_k + T_H$ and $t_{HB} = t_k + T_H + T_B$. Suppose $(p_k^\opt, u_k^\opt) \in \Phi(t_k, x_k)$ is the optimal solution of~\eqref{eqn:opt_prob_C}, which exists since $(p_k^\can, u_k^\can)$ is feasible. 
    
    First, notice that it is possible for $J(p_k^\opt, u_k^\opt) \geq 0$. Suppose the nominal trajectory is safe (i.e., $p_k^\nom(t) \in \Scal(t) \ \forall t \in [t_k, t_H]$) and terminates in the backup set (i.e., $p_k^\nom(t_H) \in \Ccal(t_H)$). Then, the candidate trajectory $(p', u')$ with switch time $t_s = t_H$ is valid. Notice that $J(p', u') = 0$ since for all $t \in [t_k, t_H]$ the candidate trajectory is equal to the nominal trajectory, and thus $L(\cdot) = 0$. Therefore, it is possible for $J(p, u) = 0$ in~\eqref{eqn:opt_prob_C}, and thus $J(p_k^\opt, u_k^\opt) \geq 0$.

    Second, notice that the cost of the candidate trajectory is 
    \neqn{
    J(p_k^\can, u_k^\can) &= \underbrace{J_1(p_k^\can, u_k^\can, t_s)}_{= 0} + \underbrace{J_2(p_k^\can, u_k^\can, t_s)}_{= \bar B}
    }
    where the the first term is zero since over the interval $t \in [t_k, t_s]$ the candidate trajectory equals the nominal trajectory. Therefore, 
    $0 \leq J(p_k^\opt, u_k^\opt) \leq J(p_k^\can, u_k^\can) = \bar B$,
    and thus 
    \neqn{
    J(p_k^\can, u_k^\can) - J(p_k^\opt, u_k^\opt) \leq \bar B, 
    }
    i.e., $\bar B$ is the maximum suboptimality. 
\end{proof}

\begin{corollary}
    The optimal candidate trajectory has
    \eqn{
    t_s \in \argmin{t_s' \in [t_k, t_k + T_H]} J_2(p^\back, u^\back, t_s'). \label{eqn:ts_search}
    }
    where for any $t_s' \in \R$,  $(p^\back, u^\back)$ is a backup trajectory from $(t_s', p_k^\nom(t_s'))$.
\end{corollary}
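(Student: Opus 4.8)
The plan is to show that, over the one–parameter family of candidate trajectories indexed by the switch time, the objective $J$ collapses to exactly the term $J_2$, so that selecting the lowest-cost candidate is equivalent to minimizing $J_2$ in $t_s$. This is essentially a repackaging of the cost decomposition already carried out immediately before \Cref{theorem:suboptimality}.

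First I would recall that for a candidate trajectory with switch time $t_s$, the objective splits as $J(p_k^\can, u_k^\can) = J_1(p_k^\can, u_k^\can, t_s) + J_2(p_k^\can, u_k^\can, t_s)$, and that $J_1$ vanishes: on $[t_k, t_s]$ the candidate coincides with the nominal trajectory (by \Cref{def:candidate_trajectory}), so the integrand is $L(t, p_k^\nom(t), u_k^\nom(t), p_k^\nom(t), u_k^\nom(t)) = 0$ by \Cref{assumption:L}. On the complementary interval $[t_s, t_k + T_H]$ the candidate follows the backup trajectory $(p^\back, u^\back)$ initialized at $(t_s, p_k^\nom(t_s))$, whence $J_2(p_k^\can, u_k^\can, t_s) = J_2(p^\back, u^\back, t_s)$. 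Combining, $J(p_k^\can, u_k^\can) = J_2(p^\back, u^\back, t_s)$, i.e.\ the total candidate cost equals the suboptimality bound $\bar B$ of \Cref{theorem:suboptimality}.

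Second, I would observe that each admissible $t_s' \in [t_k, t_k + T_H]$ determines a unique candidate via \Cref{def:candidate_trajectory}, and by the previous step that candidate has cost $J_2(p^\back, u^\back, t_s')$, where $(p^\back, u^\back)$ is the backup trajectory from $(t_s', p_k^\nom(t_s'))$. Thus the map $t_s' \mapsto J(\text{candidate}(t_s'))$ is identically the map $t_s' \mapsto J_2(p^\back, u^\back, t_s')$, and minimizing the candidate cost over the switch time is the same problem as minimizing $J_2$ over $t_s'$. The minimizing switch time therefore lies in the stated $\argmin$ set, which is exactly the claim.

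The only delicate point, and the step I expect to require the most care, is the restriction to \emph{valid} candidates: strictly, the optimal candidate should be the valid one of least cost, so the $\argmin$ ought to range over those $t_s'$ whose induced candidate is valid (and hence, by \Cref{theorem:suboptimality}(A), feasible for~\eqref{eqn:opt_prob_C}). I would note that the $t_s$–dependence of $J_2$ enters both through the lower integration limit and through the re-initialization of the backup segment at $(t_s', p_k^\nom(t_s'))$; however, since the corollary only characterizes the location of the minimizer, no monotonicity, continuity, or other regularity of this map is needed, and the argument reduces to the elementary observation that optimizing $J$ over candidates coincides with optimizing $J_2$ over $t_s'$.
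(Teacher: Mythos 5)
Your argument is correct and is essentially the same one the paper relies on: the corollary is stated without a separate proof because it follows immediately from the decomposition $J(p_k^\can,u_k^\can)=J_1+J_2$ with $J_1=0$ established just before \Cref{theorem:suboptimality}, which is precisely the computation you reproduce. Your added caveat that the $\argmin$ should in practice range over switch times yielding \emph{valid} candidates matches how the paper actually uses the corollary in the \gatekeeper{} architecture description.
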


\subsection{Optimal backup trajectories}

In principle, as part of solving~\eqref{eqn:ts_search}, one could also optimize over the set of backup trajectories. For any given $t_s'$, suppose the backup trajectory solves
    \seqn[\label{eqn:opt_prob_backup}]{
    \minimize{ (p, u) \in \Phi(t_s', p_k^\nom(t_s'))} \ & J_2(p, u, t_s') \label{eqn:opt_prob_backup:obj} \\
        \st \ &p(t) \in \Scal(t), \quad \quad  \forall t \in \Tcal, \label{eqn:opt_prob_backup:safety}\\
        &p(t_s' + T_B) \in \Ccal(t_s' + T_B), 
        \label{eqn:opt_prob_backup:terminal}
    }
    where $\Tcal = [t_s', t_k + T_B]$. In this case, we can conclude:
\begin{lemma}
\label{lemma:optimal_backup}
    Let $(p_k^\nom, u_k^\nom) \in \Phi(t_k, x_k)$ be the nominal trajectory. A candidate trajectory $(p^\can, u^\can) \in \Phi(t_k, x_k)$ with switch time $t_s$ is an optimal solution of~\eqref{eqn:opt_prob_C} if 
    \begin{itemize}
        \item[(A)] the candidate trajectory is valid,
        \item[(B)] the backup trajectory is the solution of~\eqref{eqn:opt_prob_backup}, and
        \item[(C)] the switch time $t_s$ is chosen according to~\eqref{eqn:ts_search}.
    \end{itemize}  
\end{lemma}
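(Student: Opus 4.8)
The plan is a sandwich argument: show the candidate is feasible, then bound its cost both above and below by the optimal cost of~\eqref{eqn:opt_prob_C}, forcing equality. Feasibility is immediate, since condition~(A) lets us invoke \Cref{theorem:suboptimality}(A), so the candidate satisfies~\eqref{eqn:opt_prob_C:dyn}--\eqref{eqn:opt_prob_C:terminal}; in particular an optimal solution $(p_k^\opt, u_k^\opt)$ of~\eqref{eqn:opt_prob_C} exists with $J(p_k^\opt, u_k^\opt) \le J(p^\can, u^\can)$.

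The easy direction reuses the cost decomposition preceding \Cref{theorem:suboptimality}. Since the candidate coincides with the nominal on $[t_k, t_s)$, \Cref{assumption:L} gives $J_1(p^\can, u^\can, t_s) = 0$, so $J(p^\can, u^\can) = J_2(p^\back, u^\back, t_s)$. By~(B) this equals the optimal value of the backup subproblem~\eqref{eqn:opt_prob_backup} at switch time $t_s$, and by~(C) together with~\eqref{eqn:ts_search} it equals $\min_{t_s'} J_2(p^\back, u^\back, t_s')$, the minimum of that optimal value over all switch times. Because every candidate is feasible for~\eqref{eqn:opt_prob_C}, this minimum upper-bounds the optimum: $J(p_k^\opt, u_k^\opt) \le J(p^\can, u^\can)$.

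For the reverse inequality I would extract candidate structure from the global optimum. Define the first-deviation time $t_s^* = \sup\{\tau \in [t_k, t_k+T_H] : (p_k^\opt, u_k^\opt) = (p_k^\nom, u_k^\nom) \text{ on } [t_k, \tau]\}$; this set is nonempty since the two trajectories share the initial condition $x_k$. On $[t_k, t_s^*]$ the integrand vanishes by \Cref{assumption:L}, so $J(p_k^\opt, u_k^\opt) = J_2(p_k^\opt, u_k^\opt, t_s^*)$, and continuity of the state flow gives $p_k^\opt(t_s^*) = p_k^\nom(t_s^*)$. The restriction of $(p_k^\opt, u_k^\opt)$ to times past $t_s^*$ therefore starts from $(t_s^*, p_k^\nom(t_s^*))$ and, being safe and terminating in $\Ccal$, is admissible for the backup subproblem~\eqref{eqn:opt_prob_backup} at $t_s' = t_s^*$. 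Hence $J_2(p_k^\opt, u_k^\opt, t_s^*)$ is at least the optimal backup value at $t_s^*$, which is at least its minimum over all switch times, namely $J(p^\can, u^\can)$. Chaining the inequalities gives $J(p_k^\opt, u_k^\opt) \ge J(p^\can, u^\can)$, and combined with the easy direction the two costs coincide, so the candidate is optimal.

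The main obstacle is the admissibility claim in the last step: certifying that the tail of the global optimum is genuinely a feasible point of~\eqref{eqn:opt_prob_backup} at $t_s^*$, rather than merely a safe trajectory. This requires care because the terminal constraint of the outer problem~\eqref{eqn:opt_prob_C:terminal} is imposed at $t_k + T_H + T_B$ whereas the backup subproblem~\eqref{eqn:opt_prob_backup:terminal} references $t_s' + T_B$; one must reconcile these (and check that the safety window $\Tcal$ in~\eqref{eqn:opt_prob_backup} covers the tail) before the restriction can be declared feasible. Secondary technical points are the existence and attainment of the minimizers in~\eqref{eqn:ts_search} and~\eqref{eqn:opt_prob_backup}, and justifying via continuity that the supremum defining $t_s^*$ transfers to equality of the \emph{states} even when the controls disagree on a measure-zero set; each of these is routine once the two constraint sets are shown to match.
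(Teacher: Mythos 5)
Your sandwich strategy is a genuinely different route from the paper's proof, which simply rewrites~\eqref{eqn:opt_prob_C} as a joint minimization over $(p,u,t_s)$ and declares the two problems equivalent. Your upper-bound direction (feasibility via \Cref{theorem:suboptimality}, $J_1 = 0$ by \Cref{assumption:L}, and the identification of $J(p^\can,u^\can)$ with the minimum over $t_s'$ of the optimal backup value) is correct and consistent with the cost decomposition the paper already sets up.

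The lower-bound direction, however, has a genuine gap at exactly the step you flag and then dismiss as routine. The tail of the global optimum $(p_k^\opt,u_k^\opt)$ restricted to $[t_s^*,\infty)$ is \emph{not} in general admissible for~\eqref{eqn:opt_prob_backup} at $t_s' = t_s^*$: the outer problem only forces $p_k^\opt(t_k+T_H+T_B)\in\Ccal(t_k+T_H+T_B)$, whereas~\eqref{eqn:opt_prob_backup:terminal} demands arrival in $\Ccal$ by $t_s^*+T_B$, which is strictly earlier whenever $t_s^* < t_k+T_H$. A feasible point of~\eqref{eqn:opt_prob_C} may leave the nominal early, wander, and only enter $\Ccal$ at the final instant; its tail is then infeasible for the backup subproblem at $t_s^*$, and the inequality $J_2(p_k^\opt,u_k^\opt,t_s^*)\geq(\text{optimal backup value at }t_s^*)$ fails. (Moreover, nothing forces the optimum to coincide with the nominal on any interval beyond the single point $t_k$, in which case $t_s^*=t_k$ and your comparison yields nothing.) The two constraint sets do not ``match,'' and reconciling them is not a technicality --- it is the entire content of the lemma: one must argue that restricting~\eqref{eqn:opt_prob_C} to nominal-then-backup concatenations, with the backup clock started at the switch time, loses no optimality. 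The paper's own proof makes this same leap implicitly (its claimed equivalence of the joint problem with~\eqref{eqn:opt_prob_C} presupposes that every feasible trajectory is a candidate for some $t_s$), so your more explicit argument usefully exposes where the real work lies, but as written it does not close the gap.
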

\begin{proof}
    Notice that a candidate trajectory satisfying conditions (A, B, C) is the solution of the problem
    \seqn{
    \minimize{ \substack{p \in \Lcal(\Xcal),  u \in \Lcal(\Ucal) \\ t_s \in [t_k, t_k + T_H]} } \ & J_1(p, u, t_s) + J_2(p, u, t_s)\\
        \st \ & \eqref{eqn:opt_prob_C:dyn}, \eqref{eqn:opt_prob_C:safety},  \eqref{eqn:opt_prob_C:initial}, \eqref{eqn:opt_prob_C:terminal}
    }
    which is equivalent to problem~\eqref{eqn:opt_prob_C}, except with the additional variable $t_s$. This additional variable does not affect the feasibility or optimality of the problem, therefore these optimization problems (and solutions) are equivalent.
\end{proof}


\begin{remark}
The key insight from~\Cref{lemma:optimal_backup} is that if an optimal backup trajectory is known, solving~\eqref{eqn:ts_search} yields the optimal solution to~\eqref{eqn:opt_prob_C}, but without requiring one to solve a trajectory optimization problem.  \gatekeeper{} is particularly useful when feasible (but not necessarily optimal) backup trajectories can be efficiently generated. In such cases, \eqref{eqn:ts_search} --- a scalar line search over a bounded interval --- yields a suboptimal solution to~\eqref{eqn:opt_prob_C}, with a suboptimality bound given by~\eqref{eqn:suboptimality_bound}.
\end{remark}

\subsection{The \gatekeeper{} architecture}

\gatekeeper{} is an intermediary module between the planner and the low-level controller. It uses the planner's nominal trajectory to construct a committed trajectory that is defined for all future time, guaranteed to be safe, and minimally deviates from the nominal. At each planning step $k$ at time $t_k$ and state $x_k$, given a nominal trajectory $(p_k^\nom, u_k^\nom)$ \gatekeeper{} constructs candidate trajectories for each switch time $t_s \in [t_k, t_k + T_H]$ (see~\Cref{def:candidate_trajectory}). Valid candidates are checked, and if any exist, the one minimizing the cost in~\eqref{eqn:ts_search} is selected; otherwise, the committed trajectory remains unchanged. This guarantees safety for all future time if a valid committed exists initially.

While the optimal backup could be found by solving~\eqref{eqn:opt_prob_backup}, this may be computationally expensive. Instead, efficient suboptimal backups can still yield good performance, with online-computable suboptimality bounds given in~\Cref{theorem:suboptimality}, enabling runtime monitoring of mission progress.

\begin{remark} \label{remark:differences_to_TRO}
The \gatekeeper{} framework was first introduced in~\cite{agrawal2024gatekeeper}, but this work includes several key extensions:
\begin{itemize}
\item Whereas~\cite{agrawal2024gatekeeper} selected the switch time $t_s$ to maximize the validity duration of the nominal trajectory (i.e., minimizing the backup duration), we generalize this by allowing arbitrary cost functions in~\eqref{eqn:ts_search}.
\item Disturbances are handled in~\cite{agrawal2024gatekeeper} but are omitted here for clarity; the analysis can be extended to include them.
\item We introduce a formal optimality framework, including sufficient conditions for optimality (\Cref{lemma:optimal_backup}) and suboptimality bounds (\Cref{theorem:suboptimality}), which were not considered in~\cite{agrawal2024gatekeeper} or prior works like~\cite{tordesillas2019faster, singletary2022safe}.
\end{itemize}
\end{remark}

\section{APPLICATION}
\label{section:applications}

We demonstrate the proposed architecture to a multi-agent formation flight problem, where agents must safely navigate through a domain with multiple \acp{EZ}. 

\subsubsection*{Problem Setup}

Consider a team of $N_A$ agents:
\eqn{
\dot x_i &= \bmat{v_i \cos x_{i, 3},  & v_i \sin x_{i, 3}, & \omega_i}^T 
}
where $x_i \in \Xcal \subset \R^3$ and $u_i = [v_i, \omega_i] \in \Ucal \subset \R^2$. The state defines 2D position and heading; inputs are bounded: $v_i \in [0.8, 1.0]$, $\omega_i \in [-10.0, 10.0]$. Units are normalized ($\rm LU/TU = 1$). Agents must avoid $N_Z$ \acp{EZ}:
\eqn{
x_i \in \Scal = \{ x : h_j(x) \geq 0, \ \forall j \in \{1,\dots,N_Z\} \},
}
where $h_j$ is defined in~\cite{chapman2025engagement}. We omit its expression for brevity, and as it is not critical to the algorithm.
The leader’s trajectory $(p^L, u^L)$ is precomputed using a Dubins-based RRT* method~\cite{wolek2024sampling}, generating a safe path in $\sim$13 seconds.
Followers track offset curves from the leader’s path, using a forward-propagated controller to ensure feasibility. Inter-agent collision constraints are not considered.

This problem is challenging since it includes (A)~multiple safety constraints, (B)~tight input bounds, and (C)~complicated $h_j$ expressions that are difficult to handle analytically.

\subsubsection*{Applying \gatekeeper{}}

\begin{figure*}
    \centering
    \includegraphics[width=\linewidth]{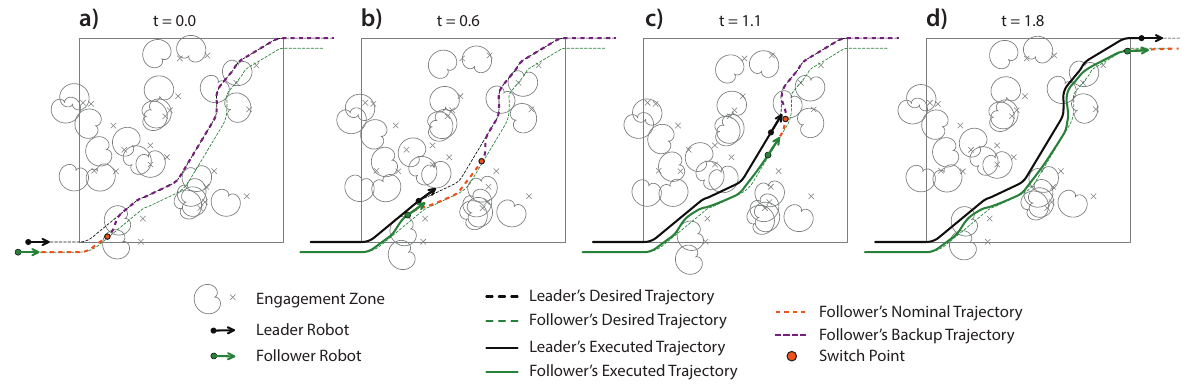}
    \caption{Formation flight using \gatekeeper{}. The leader follows an RRT* path (black dashed); followers track offset curves. The committed trajectory (green) includes nominal (orange dashed) and backup (purple dashed) trajectories.}
    \label{fig:gk}
\end{figure*}

We define the backup set as the leader’s trajectory:
\eqn{
\Ccal = \{ x \in \Xcal: \exists \tau \in [t_0, t_f] \  \st \  p^L(\tau) = x \}.
}

This set is forward invariant under the backup controller:
\eqn{
\pi^B(t, x) = u^L(t - t' + \tau), \quad \forall t \geq t',
}
where at time $t'$ the robot joins the leaders path: $x(t') = p^L(\tau)$. Thus, $\pi^B$ keeps the agent on $\Ccal$ for all $t \geq t'$. To compute backup trajectories, we use Dubins shortest paths (using~\cite{shkel2001classification, DubinsJL}) to a discrete set of points in $\Ccal$, rejecting unsafe paths and selecting the shortest safe one. Candidate trajectories are constructed by minimizing the cost in~\eqref{eqn:ts_search}, using the norm $\norm{x - x_{\rm nom}}_Q$ with $Q=\rm{diag}(1, 1, 0)$.

\Cref{fig:gk} depicts committed trajectories. They follow the nominal until nearing an \ac{EZ}, then switch to tracking the leader’s path. Since committed trajectories are updated, agents can rejoin the nominal trajectory after passing an \ac{EZ}.

\subsubsection*{Results}

\Cref{fig:results} summarizes the results.  We compare \gatekeeper{} against (A)~\acp{CBF-QP} and (B)~nonlinear trajectory optimization.  
The \ac{CBF-QP} minimizes deviation from nominal input while satisfying $
L_fh_j(x_i) + L_gh_j(x_i) u  \geq -\alpha(h_j(x_i))$, for all $j$
subject to input bounds. This may become infeasible due to multiple constraints or invalidity of $h_j$ as a \ac{CBF}. When infeasible, we re-solve a slacked version of the \acs{QP}, minimizing the norm of the slack. 
The nonlinear trajectory optimization uses IPOPT~\cite{pulsipher2022unifying}, which replans every 0.2~TU for a 0.5~TU horizon, initialized using the nominal trajectory.

\begin{figure*}
    \centering
    \includegraphics[width=\linewidth]{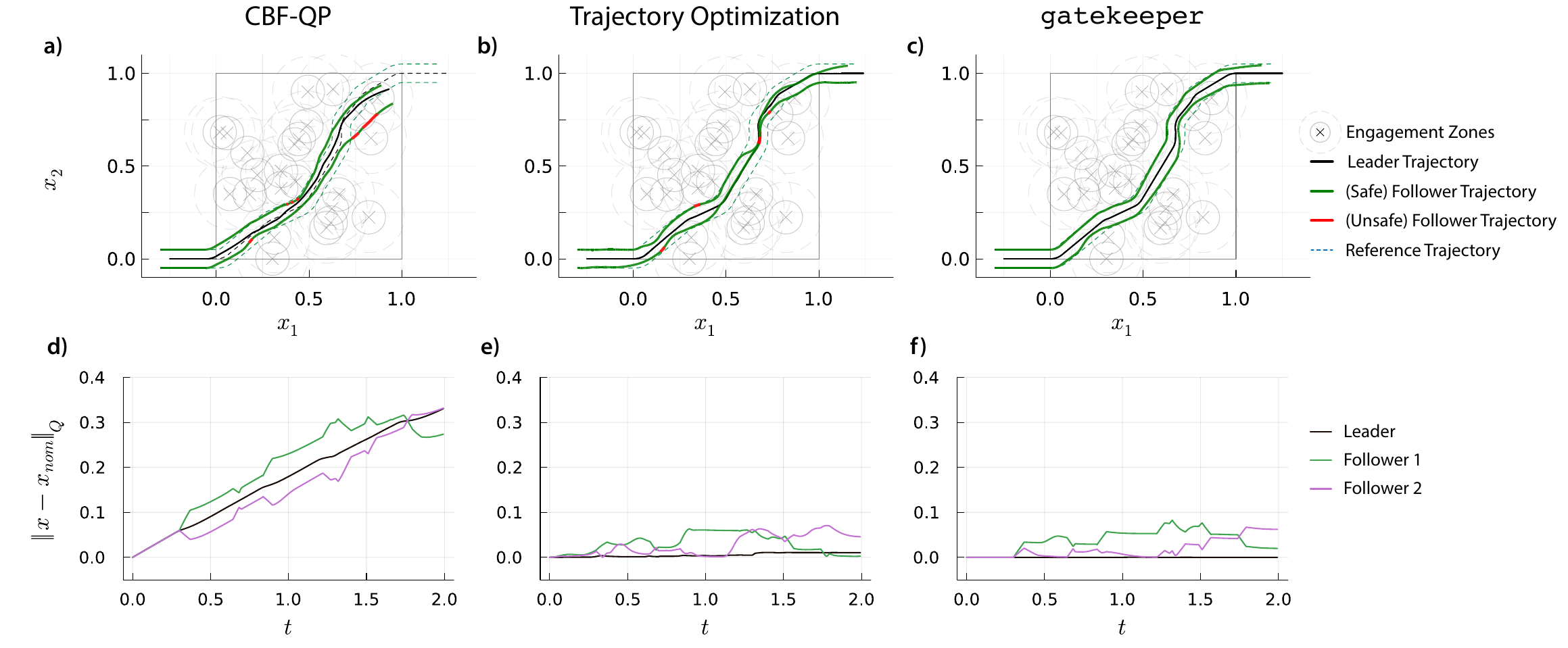}
    \caption{Simulation Results. (a, b, c)~depict the trajectories of the leader and the follower agents through the domain with 24~\acp{EZ} using the different planning methods. The leader's trajectory is drawn in black, while the follower's trajectories are drawn in green (when not in collision with a \ac{EZ}), and in red (when in collision). While \ac{CBF-QP} and IPOPT have safety violations, \gatekeeper{} does not. 
    (d, e, f)~show the deviation of the agents from the desired trajectory, with $Q = \rm{diag}(1, 1, 0)$. While the deviation in IPOPT and \gatekeeper{} are similar, the deviation is higher for the \ac{CBF-QP}. }
    \label{fig:results}
\end{figure*}

Comparing Figures \ref{fig:results}a, \ref{fig:results}b, \ref{fig:results}c, we can see closed-loop trajectories of the agents as they navigate through the environment. Only the \gatekeeper{} method shows no safety violations. For the \ac{CBF-QP} method, the $h_j$ functions may not be \acp{CBF}, especially under input constraints and the presence of multiple constraints, and therefore can lead to violations. The trajectory optimization method also has collisions, since there is no guarantee that IPOPT will find a feasible solution satisfying all the constraints. 

Figures \ref{fig:results}d, \ref{fig:results}e, \ref{fig:results}f compare the deviation of the trajectories from the desired trajectories. Here, we see that the \ac{CBF-QP} method has large deviations, since the controller reduces the $v$ to try to meet safety constraints. Furthermore, notice that the leader also deviates from the reference trajectory, even though the leaders path is safe. This is because even though the trajectory is safe, the trajectory approaches the boundary too quickly for the \ac{CBF} condition to be satisfied. Comparing Figures \ref{fig:results}e, \ref{fig:results}f, notice that the deviations in \gatekeeper{} and the trajectory optimization methods are comparable, indicating that \gatekeeper{} has minimal sub-optimality.

Finally, we compare the computation load. The total computation time required to generate the closed-loop trajectory all three agents is as follows: \ac{CBF-QP}: 9.49~s, trajectory optimization: 302.65~s,  \gatekeeper{}: 3.61~s. Ergo, \ac{CBF-QP} and \gatekeeper{} required only 3.1\% and 1.2\% of the computational time of nonlinear trajectory optimization, but only \gatekeeper{} resulted in safe trajectories. Our proposed approach is significantly computationally cheaper and has strong guarantees of constraint satisfaction, despite multiple state and input constraints. 





\section{CONCLUSIONS}

We have presented \gatekeeper{}, a flexible safety framework that guarantees safe execution in real-time planning systems by committing to a safety-verified trajectory with a known backup strategy. In particular, we quantify the suboptimality of the \gatekeeper{} approach, a quantity that we can compute in real-time. 

We demonstrated \gatekeeper{} on a challenging multi-agent formation flight task with tight safety margins and complex constraints. Compared to \ac{CBF}-based control and trajectory optimization, \gatekeeper{} was the only method to maintain safety throughout, while also being significantly faster and only minimally suboptimal.

These results suggest \gatekeeper{} is a promising direction for safety-critical robotics, especially when robustness and computational efficiency are paramount. Future work includes extending \gatekeeper{} to handle inter-agent collision constraints in a distributed communication network.

\bibliographystyle{IEEEtran}
\bibliography{references}

\begin{thebibliography}{10}
\providecommand{\url}[1]{#1}
\csname url@samestyle\endcsname
\providecommand{\newblock}{\relax}
\providecommand{\bibinfo}[2]{#2}
\providecommand{\BIBentrySTDinterwordspacing}{\spaceskip=0pt\relax}
\providecommand{\BIBentryALTinterwordstretchfactor}{4}
\providecommand{\BIBentryALTinterwordspacing}{\spaceskip=\fontdimen2\font plus
\BIBentryALTinterwordstretchfactor\fontdimen3\font minus
  \fontdimen4\font\relax}
\providecommand{\BIBforeignlanguage}[2]{{%
\expandafter\ifx\csname l@#1\endcsname\relax
\typeout{** WARNING: IEEEtran.bst: No hyphenation pattern has been}%
\typeout{** loaded for the language `#1'. Using the pattern for}%
\typeout{** the default language instead.}%
\else
\language=\csname l@#1\endcsname
\fi
#2}}
\providecommand{\BIBdecl}{\relax}
\BIBdecl

\bibitem{alves2018considerations}
E.~E. Alves, D.~Bhatt, B.~Hall, K.~Driscoll, A.~Murugesan, and J.~Rushby,
  ``Considerations in assuring safety of increasingly autonomous systems,''
  Tech. Rep., 2018.

\bibitem{agrawal2024gatekeeper}
D.~R. Agrawal, R.~Chen, and D.~Panagou, ``gatekeeper: Online safety
  verification and control for nonlinear systems in dynamic environments,''
  \emph{IEEE TRO}, 2024.

\bibitem{borrelli2017predictive}
F.~Borrelli, A.~Bemporad, and M.~Morari, \emph{Predictive control for linear
  and hybrid systems}.\hskip 1em plus 0.5em minus 0.4em\relax Cambridge
  University Press, 2017.

\bibitem{lopez2019dynamic}
B.~T. Lopez, J.-J.~E. Slotine, and J.~P. How, ``Dynamic tube mpc for nonlinear
  systems,'' in \emph{IEEE ACC}.\hskip 1em plus 0.5em minus 0.4em\relax IEEE,
  2019, pp. 1655--1662.

\bibitem{yin2025safe}
J.~Yin, O.~So, E.~Y. Yu, C.~Fan, and P.~Tsiotras, ``Safe beyond the horizon:
  Efficient sampling-based mpc with neural control barrier functions,''
  \emph{arXiv preprint arXiv:2502.15006}, 2025.

\bibitem{ames2016control}
A.~D. Ames, X.~Xu, J.~W. Grizzle, and P.~Tabuada, ``Control barrier function
  based quadratic programs for safety critical systems,'' \emph{IEEE TAC},
  vol.~62, no.~8, pp. 3861--3876, 2016.

\bibitem{garg2024advances}
K.~Garg, J.~Usevitch, J.~Breeden, M.~Black, D.~Agrawal, H.~Parwana, and
  D.~Panagou, ``Advances in the theory of control barrier functions: Addressing
  practical challenges in safe control synthesis for autonomous and robotic
  systems,'' \emph{Annual Reviews in Control}, vol.~57, p. 100945, 2024.

\bibitem{cohen2024safety}
M.~H. Cohen, T.~G. Molnar, and A.~D. Ames, ``Safety-critical control for
  autonomous systems: Control barrier functions via reduced-order models,''
  \emph{Annual Reviews in Control}, vol.~57, p. 100947, 2024.

\bibitem{mitchell2005time}
I.~M. Mitchell, A.~M. Bayen, and C.~J. Tomlin, ``A time-dependent
  hamilton-jacobi formulation of reachable sets for continuous dynamic games,''
  \emph{IEEE TAC}, vol.~50, no.~7, pp. 947--957, 2005.

\bibitem{ganai2024hamilton}
M.~Ganai, S.~Gao, and S.~Herbert, ``Hamilton-jacobi reachability in
  reinforcement learning: A survey,'' \emph{IEEE Open Journal of Control
  Systems}, 2024.

\bibitem{tordesillas2019faster}
J.~Tordesillas, B.~T. Lopez, and J.~P. How, ``Faster: Fast and safe trajectory
  planner for flights in unknown environments,'' in \emph{IEEE/RSJ IROS}.\hskip
  1em plus 0.5em minus 0.4em\relax IEEE, 2019, pp. 1934--1940.

\bibitem{chen2021backup}
Y.~Chen, M.~Jankovic, M.~Santillo, and A.~D. Ames, ``Backup control barrier
  functions: Formulation and comparative study,'' in \emph{2021 60th IEEE
  Conference on Decision and Control (CDC)}.\hskip 1em plus 0.5em minus
  0.4em\relax IEEE, 2021, pp. 6835--6841.

\bibitem{singletary2022safe}
A.~Singletary, A.~Swann, I.~D.~J. Rodriguez, and A.~D. Ames, ``Safe drone
  flight with time-varying backup controllers,'' in \emph{IEEE/RSJ IROS}.\hskip
  1em plus 0.5em minus 0.4em\relax IEEE, 2022, pp. 4577--4584.

\bibitem{hobbs2023runtime}
K.~L. Hobbs, M.~L. Mote, M.~C. Abate, S.~D. Coogan, and E.~M. Feron, ``Runtime
  assurance for safety-critical systems: An introduction to safety filtering
  approaches for complex control systems,'' \emph{IEEE Control Systems
  Magazine}, vol.~43, no.~2, pp. 28--65, 2023.

\bibitem{jung2024contingency}
L.~Jung, A.~Estornell, and M.~Everett, ``Contingency constrained planning with
  mppi within mppi,'' \emph{arXiv preprint arXiv:2412.09777}, 2024.

\bibitem{janwani2024learning}
N.~C. Janwani, E.~Da{\c{s}}, T.~Touma, S.~X. Wei, T.~G. Molnar, and J.~W.
  Burdick, ``A learning-based framework for safe human-robot collaboration with
  multiple backup control barrier functions,'' in \emph{IEEE ICRA}.\hskip 1em
  plus 0.5em minus 0.4em\relax IEEE, 2024, pp. 11\,676--11\,682.

\bibitem{ko2024backup}
D.~Ko and W.~K. Chung, ``A backup control barrier function approach for
  safety-critical control of mechanical systems under multiple constraints,''
  \emph{IEEE/ASME Transactions on Mechatronics}, 2024.

\bibitem{chapman2025engagement}
T.~Chapman, I.~E. Weintraub, A.~Von~Moll, and E.~Garcia, ``Engagement zones for
  a turn constrained pursuer,'' \emph{arXiv preprint arXiv:2502.00364}, 2025.

\bibitem{wolek2024sampling}
A.~Wolek, I.~E. Weintraub, A.~Von~Moll, D.~Casbeer, and S.~G. Manyam,
  ``Sampling-based risk-aware path planning around dynamic engagement zones,''
  \emph{IFAC-PapersOnLine}, vol.~58, no.~28, pp. 594--599, 2024.

\bibitem{shkel2001classification}
A.~M. Shkel and V.~Lumelsky, ``Classification of the dubins set,''
  \emph{Robotics and Autonomous Systems}, vol.~34, no.~4, pp. 179--202, 2001.

\bibitem{DubinsJL}
\BIBentryALTinterwordspacing
K.~Sundar, ``Dubins.jl,'' 2018. [Online]. Available:
  \url{https://github.com/kaarthiksundar/Dubins.jl}
\BIBentrySTDinterwordspacing

\bibitem{pulsipher2022unifying}
J.~L. Pulsipher, W.~Zhang, T.~J. Hongisto, and V.~M. Zavala, ``A unifying
  modeling abstraction for infinite-dimensional optimization,'' \emph{Computers
  \& Chemical Engineering}, vol. 156, 2022.

\end{thebibliography}

\end{document}